\newtheorem{lemma}{Lemma}
\newtheorem{proposition}{Proposition}
\theoremstyle{definition}
\theoremstyle{remark}
\newcommand{\tuple}[1]{\langle#1\rangle}
\newcommand{\ttt}{\texttt}
\newcommand{\mtt}{\mathtt}
\newcommand{\mrm}{\mathrm}
\newcommand{\N}{\mathbb{N}}
\newcommand{\Sem}[1]{\llbracket{#1}\rrbracket}
\newcommand{\dder}{\Rightarrow}
\newcommand{\der}{\Rightarrow^*}
\newcommand{\lder}{\Leftarrow^*}
\title{Reasoning about Graph Programs}
\author{Detlef Plump 
\institute{The University of York, United Kingdom}}
\begin{document}

\maketitle
\thispagestyle{empty}

\begin{abstract}
GP 2 is a non-deterministic programming language for computing by graph transformation. One of the design goals for GP 2 is syntactic and semantic simplicity, to facilitate formal reasoning about programs. In this paper, we demonstrate with four case studies how programmers can prove termination and partial correctness of their solutions. We argue that GP 2's graph transformation rules, together with induction over the length of program executions, provide a convenient framework for program verification.
\end{abstract}

\section{Introduction}
\label{sec:introduction}

The use of graphs to model dynamic structures is ubiquitous in computer science: application areas include compiler construction, pointer programming,  model-driven software development, and natural language processing. The behaviour of systems in such domains can be captured by graph transformation rules specifying small state changes. Current languages based on graph transformation rules include 
AGG \cite{Runge-Ermel-Taentzer11a},
GReAT \cite{Agrawal-Karsai-Neema-Shi-Vizhanyo06a},
GROOVE \cite{Ghamarian-deMol-Rensink-Zambon-Zimakova12a},
GrGen.Net \cite{Jakumeit-Buchwald-Kroll10a} and
PORGY \cite{Fernandez-Kirchner-Mackie-Pinaud14a}.
This paper focusses on the graph programming language GP 2 \cite{Plump12a} which aims to support formal reasoning about programs.\footnote{GP stands for \emph{graph programs}.}

A rigorous Hoare-logic approach to verifying programs in GP 1 (the predecessor of GP 2) is described in \cite{Poskitt-Plump12a}. However, this calculus is restricted to programs in which loop bodies and guards of branching statements are sets of rules rather than arbitrary subprograms. Moreover, the assertions of \cite{Poskitt-Plump12a} are first-order formulas and hence cannot express non-local properties such as connectedness or the absence of cycles. (Such properties can be expressed with the monadic second-order assertions of \cite{Poskitt-Plump14a}. That paper's framework is currently extended to GP 2.)

In this paper, we take a more relaxed view on program verification and express specifications and proofs in ordinary mathematical language. Besides lifting the mentioned restrictions, this approach allows programmers to formulate invariants and induction proofs succinctly, without getting stuck in formal details. The possible reduction in rigor need not be a drawback if the liberal approach precedes and complements rigorous verification in a formal calculus such as Hoare-logic.

In Section \ref{sec:transitive} to Section \ref{sec:series-parallel}, we verify four simple GP 2 programs, for generating the transitive closure of a graph, computing a vertex colouring, and checking graph-theoretic properties. In all case studies, we prove termination and partial correctness of the program in question. It turns out that graph transformation rules together with induction on derivations provide a convenient formalism for reasoning.

\section{The Language GP 2}
\label{sec:gp2}

We briefly introduce GP 2, by describing some selected features and showing an example of a graph transformation rule. The original definition of GP 2, including a formal operational semantics, is given in \cite{Plump12a}; an updated version can be found in \cite{Bak15a}. There are currently two implementations of GP 2, a compiler generating C code \cite{Bak-Plump16a} and an interpreter for exploring the language's non-determinism \cite{Bak-Faulkner-Plump-Runciman15a}. 

The principal programming constructs in GP~2 are conditional graph-transformation rules labelled with expressions. Rules operate on \emph{host graphs}\/ whose nodes and edges are labelled with lists of integers and character strings. Variables in rules are of type \texttt{int}, \texttt{char}, \texttt{string}, \texttt{atom} or \texttt{list}, where \texttt{atom} is the union of \texttt{int} and \texttt{string}. Atoms are considered as lists of length one, hence integers and strings are also lists. Given lists $\mtt{x}$ and $\mtt{y}$, their concatenation is written \texttt{x:y}. 

Besides a list, labels may contain a \emph{mark}\/ which is one of the values \ttt{red}, \ttt{green}, \ttt{blue}, \ttt{grey} and \ttt{dashed} (where \ttt{grey} and \ttt{dashed} are reserved for nodes and edges, respectively). Marks may be used to  highlight items in input or output graphs, or to record which items have already been visited during a graph traversal. In this paper, we assume that programs are applied to input graphs without any marks. This allows to formulate succinct correctness claims in the case studies below.

Figure \ref{fig:rule-example} shows an example of a rule which replaces the grey node and its incident edges with a dashed edge labelled with the integer 7. In addition, nodes 1 and 2 are relabelled with the values of $\mtt{x{:}y}$ and $\mtt{n \mathop\ast n}$, respectively, where the actual parameters for $\mtt{x}$, $\mtt{y}$ and $\mtt{n}$ are found by matching (injectively) the left-hand graph in a host graph. The rule is applicable only if the grey node is not incident with other edges (the \emph{dangling condition}) and if the \ttt{where}-clause is satisfied. The latter requires that $\mtt{n}$ is instantiated with a negative integer and that there is no edge from node 1 to node 2 in the host graph.

\begin{figure}[htb]
\centering
\input{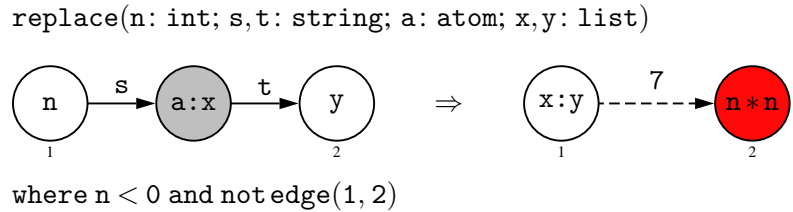}
\caption{Declaration of a conditional rule} \label{fig:rule-example}
\end{figure}

The grammar in Figure \ref{fig:command-syntax} gives the abstract syntax of GP 2 programs (omitting rule declarations and graph labels). A program consists of declarations of conditional rules and procedures, and exactly one declaration of a main command sequence. The category RuleId refers to declarations of conditional rules in RuleDecl. Procedures must be non-recursive, they can be seen as macros with local declarations.

\begin{figure}[htb]
\begin{center}
\begin{tabular}{lcl}
Prog & ::= & Decl \{Decl\} \\
Decl & ::= & RuleDecl $\mid$ ProcDecl $\mid$ MainDecl \\
ProcDecl & ::= & ProcId `=' [ `[' LocalDecl `]' ] ComSeq \\
LocalDecl & ::= & (RuleDecl $\mid$ ProcDecl) \{LocalDecl\} \\
MainDecl & ::= & \ttt{Main} `=' ComSeq \\
ComSeq & ::= & Com \{`;' Com\} \\
Com & ::= & RuleSetCall $\mid$ ProcCall \\
&& $\mid$ \ttt{if} ComSeq \ttt{then} ComSeq [\ttt{else} ComSeq] \\
&& $\mid$ \ttt{try} ComSeq [\ttt{then} ComSeq] [\ttt{else} ComSeq] \\
&& $\mid$ ComSeq `{!}' $\mid$ ComSeq \ttt{or} ComSeq \\
&& $\mid$ `(' ComSeq `)' $\mid$ \ttt{break} $\mid$ \ttt{skip} $\mid$ \ttt{fail} \\
RuleSetCall & ::= & RuleId $\mid$ `\{' [RuleId \{`,' RuleId\}] `\}' \\
ProcCall & ::= & ProcId 
\end{tabular}
\end{center}
\caption{Abstract syntax of GP 2 programs \label{fig:command-syntax}}
\end{figure}

The call of a rule set $\{r_1,\dots,r_n\}$ non-deterministically applies one of the rules whose left-hand graph matches a subgraph of the host graph such that the dangling condition and the rule's application condition are satisfied. The call \emph{fails}\/ if none of the rules is applicable to the host graph. 

The command \ttt{if} $C$ \ttt{then} $P$ \ttt{else} $Q$ is executed on a host graph $G$ by first executing $C$ on a copy of $G$. If this results in a graph, $P$\/ is executed on the original graph $G$; otherwise, if $C$ fails, $Q$ is executed on $G$. The \ttt{try} command has a similar effect, except that $P$\/ is executed on the result of $C$'s execution. 

The loop command $P!$ executes the body $P$\/ repeatedly until it fails. When this is the case, $P!$ terminates with the graph on which the body was entered for the last time. The \ttt{break} command inside a loop terminates that loop and transfers control to the command following the loop.

A program $P$ \ttt{or} $Q$ non-deterministically chooses to execute either $P$ or $Q$, which can be simulated by a rule-set call and the other commands \cite{Plump12a}. The commands \ttt{skip} and \texttt{fail} can also be expressed by the other commands.

\section{Case Study: Transitive Closure}
\label{sec:transitive}

A graph is \emph{transitive} if for each directed path from a node $v$ to another node $v'$, there is an edge from $v$ to $v'$. The program \ttt{TransClosure} in Figure \ref{fig:transitive-closure} computes the transitive closure of a host graph $G$\/ by applying the single rule \ttt{link} as long as possible. Each application amounts to non-deterministically selecting a subgraph of $G$\/ that matches \ttt{link}'s left graph, and adding to it an edge from node 1 to node 3 provided there is no such edge (with any label). Figure \ref{fig:TransClosure_execution} shows an execution of \ttt{TransClosure} in which \ttt{link} is applied eight times (arcs with two arrowheads represent pairs of edges of opposite direction). The resulting graph is the complete graph of four nodes because the start graph is a cycle. 

\begin{figure}[htb]
\begin{center}
 \input{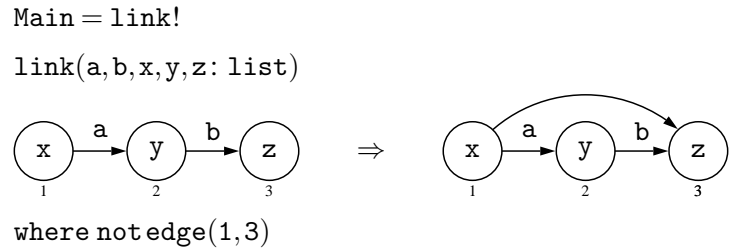}
\end{center}
\caption{The program \protect\ttt{TransClosure} \label{fig:transitive-closure}}
\end{figure}

The next two propositions show that \ttt{TransClosure} is correct: for every input graph $G$, the program produces the smallest transitive graph that results from adding unlabelled edges to $G$.\footnote{By a graphical convention of GP 2, ``unlabelled'' nodes and edges are actually labelled with the empty list.} 

\begin{figure}[htb]
\begin{center}
 \input{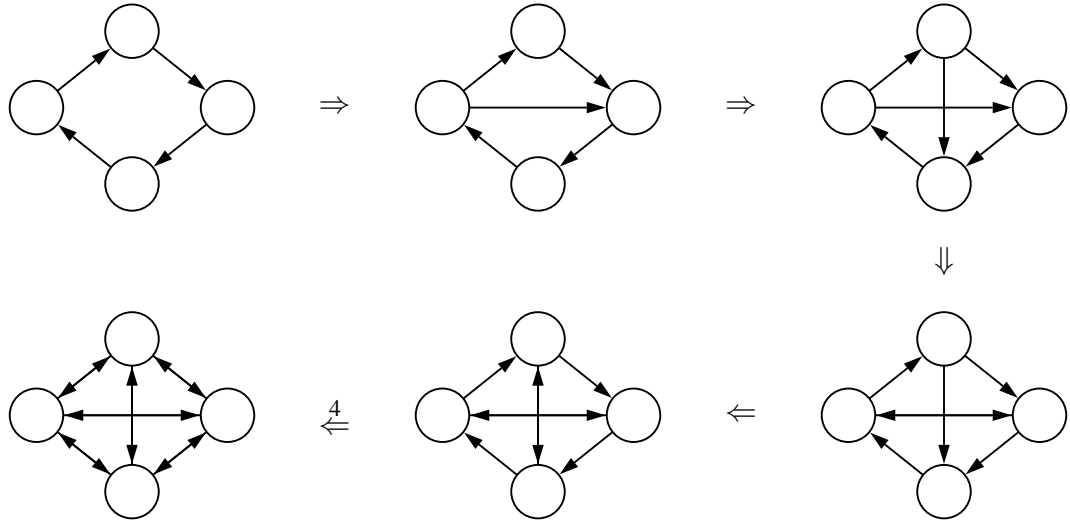}
\end{center}
\caption{An execution of \ttt{TransClosure} \label{fig:TransClosure_execution}}
\end{figure}

\begin{proposition}[Termination]
On every host graph $G$, program \ttt{TransClosure} terminates after at most\/ $|V_G| \times |V_G|$ rule applications. 
\end{proposition}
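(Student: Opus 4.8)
The plan is to find a termination measure on host graphs that strictly decreases with every application of \ttt{link} while staying between $0$ and $|V_G|^2$; this bounds the length of every derivation from $G$, and hence the number of applications made by the loop \ttt{link!}, by $|V_G| \times |V_G|$.

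First I would collect the relevant structural facts about the rule \ttt{link}. It neither creates nor deletes nodes, so along any derivation $G = G_0 \dder G_1 \dder G_2 \dder \cdots$ the node set is preserved up to isomorphism; write $n = |V_{G_i}| = |V_G|$ for all $i$. Comparing the left and right graphs of \ttt{link}, the rule is moreover non-deleting: the three nodes and the two edges labelled \ttt{a} and \ttt{b} are all kept, and the sole effect of an application is to add one new edge from the image of node~1 to the image of node~3. Since rule matches are injective, these images are \emph{distinct} nodes; and since the application condition is \ttt{not edge(1,3)}, in the graph to which \ttt{link} is applied there is no edge at all from the first image to the second.

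Next I would set up the measure. For a graph $H$ on the fixed node set $V_G$, let $R(H) \subseteq V_G \times V_G$ be the set of ordered pairs $(v,v')$ joined by at least one edge of $H$ directed from $v$ to $v'$, and put $\mu(H) = n^2 - |R(H)|$, so that $0 \le \mu(H) \le n^2$. The crucial step is to show that $H \dder H'$ via \ttt{link} implies $\mu(H') = \mu(H) - 1$: by the facts above, $H'$ arises from $H$ by adding a single edge between an ordered pair $(v_1, v_3)$ with $(v_1, v_3) \notin R(H)$ and changing nothing else, so $R(H') = R(H) \cup \{(v_1, v_3)\}$ is a disjoint union and $|R(H')| = |R(H)| + 1$.

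Finally, since $\mu(G_0) \le n^2$, the value of $\mu$ drops by exactly $1$ at each step, and $\mu \ge 0$, no derivation from $G$ can have more than $n^2$ steps; hence \ttt{link!} terminates and performs at most $|V_G| \times |V_G|$ rule applications. The part that needs the most care is the second paragraph: reading off from the rule's $\langle L \leftarrow K \rightarrow R \rangle$ span that \ttt{link} only ever adds edges and never deletes one, and combining this with injectivity of matches and the \ttt{where}-clause to conclude that $R$ strictly grows. Once that is secured, the counting is routine. (The bound can in fact be sharpened to $n^2 - n$, since \ttt{link} adds edges only between distinct nodes and only where none was already present, but the claimed bound requires no such refinement.)
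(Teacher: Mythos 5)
Your proposal is correct and follows essentially the same route as the paper: your measure $\mu(H) = n^2 - |R(H)|$ is exactly the paper's count of ordered node pairs with no connecting edge, which the application condition \ttt{not edge(1,3)} forces to decrease by one at each step. You simply spell out in more detail why the rule is non-deleting and why the measure drops by exactly one, which the paper leaves implicit.
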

\begin{proof} 
Given any host graph $X$, let 
\[ \#X = |\{\tuple{v,w} \in V_X \times V_X \mid \text{there is no edge from $v$ to $w$} \}|.\] 
Note that $\#X \leq |V_X| \times |V_X|$. By the application condition of \ttt{link}, every step $G \dder_{\mtt{link}} H$\/ satisfies $\#H = \#G - 1$. Hence $\mtt{link!}$ terminates after at most $|V_G| \times |V_G|$ rule applications. 
\end{proof}

\begin{proposition}[Correctness]
Program \ttt{TransClosure} returns the transitive closure of the input graph. 
\end{proposition}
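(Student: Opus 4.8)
The plan is to characterise the possible outputs of \ttt{link!} completely: I will show that every graph $H$ with $G \der_{\mtt{link}} H$ such that \ttt{link} is not applicable to $H$ consists of $G$ together with exactly one unlabelled edge for every pair of distinct nodes that are joined by a directed path in $G$ but not by an edge — that is, $H$ is the transitive closure of $G$. Since the Termination proposition guarantees that such an $H$ exists for every input $G$ (and a loop never fails), and since this characterisation determines $H$ uniquely, this simultaneously shows that the output is independent of the non-deterministic choices made during execution. I would obtain the characterisation from two halves: a \emph{soundness} invariant (no unjustified edges are ever added) and a \emph{completeness} property (the final graph is transitive).

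For soundness I would prove, by induction on the length of a derivation $G \der_{\mtt{link}} X$, the invariant: $V_X = V_G$; $X$ arises from $G$ by adding only unlabelled edges; and for every edge from a node $v$ to a node $w$ in $X$ there is a directed path from $v$ to $w$ in $G$. The base case is immediate, since an edge is a path of length one. For the induction step, a \ttt{link}-step adds an edge from $v$ to $w$ where $X$ already contains edges $v \to u$ and $u \to w$ (and, by the application condition, no edge $v \to w$, which also prevents parallel edges from being created); by the induction hypothesis these two edges are witnessed by directed paths $v \leadsto u$ and $u \leadsto w$ in $G$, and their concatenation witnesses the new edge. Because \ttt{link} is matched injectively, the endpoints $v$ and $w$ of every added edge are distinct, so every edge of $H$ is either an edge of $G$ or an unlabelled edge between two distinct, path-connected nodes; in particular $H$ is a subgraph of the transitive closure of $G$.

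For completeness I would first isolate the elementary fact that a graph is transitive precisely when, for every triple of \emph{pairwise distinct} nodes $v,u,w$ with edges $v \to u$ and $u \to w$, there is an edge $v \to w$; this is proved by induction on the length of a directed path between two distinct nodes, shortening the path whenever its first three nodes fail to be pairwise distinct. The point of this reformulation is that ``\ttt{link} is not applicable to $H$'' says exactly that no injective match of \ttt{link}'s left-hand graph satisfies \ttt{not edge(1,3)}, i.e.\ for all pairwise distinct $v,u,w$ with edges $v \to u$ and $u \to w$ in $H$ there is an edge $v \to w$ in $H$; hence $H$ is transitive. Since \ttt{link} deletes nothing, $G$ is a subgraph of $H$, so $H$ contains a transitive supergraph of $G$ and therefore contains an edge between every pair of distinct nodes joined by a path in $G$. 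Combining this with the soundness invariant shows that the edge set of $H$ is exactly that of the transitive closure of $G$ and that the labels agree ($G$'s original edges keep their labels, the added ones are unlabelled), so $H$ is the transitive closure of $G$.

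The hard part is not any single calculation but the gap between the two notions of transitivity in play: the property one wants to conclude about $H$ quantifies over arbitrary directed paths, which may revisit vertices, whereas non-applicability of \ttt{link} only speaks about injective length-two matches, i.e.\ paths through three genuinely distinct nodes. The reformulation lemma closes this gap, and the only delicate point in its proof is the case where the middle vertex of a path coincides with one of its endpoints. Once that lemma is available, the soundness invariant together with a routine comparison of edge sets completes the argument.
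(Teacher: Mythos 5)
Your proposal is correct and follows essentially the same route as the paper: your ``soundness'' invariant (every added edge is witnessed by a path in $G$, proved by induction on the derivation) is exactly the paper's minimality invariant, and your ``completeness'' half is the paper's induction on path length showing the result is transitive, merely packaged as a standalone reformulation lemma. Your explicit treatment of the injectivity issue (pairwise distinct triples versus arbitrary paths) is a welcome sharpening of the paper's ``without loss of generality, assume $v_0,\dots,v_n$ are distinct,'' but it is the same idea.
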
 
\begin{proof} 
The previous proposition guarantees that for every input graph $G$, the loop \ttt{link!} returns some host graph $T$. Because \ttt{link} does not delete or relabel any items, each step $X \dder_{\mtt{link}} Y$\/ comes with an injective graph morphism $X \to Y$. It follows that $T$\/ is an extension of $G$ (up to isomorphism).

We show that $T$\/ is transitive by induction on the length of paths in $T$. Consider a directed path $v_0, v_1, \dots, v_n$ with $v_0 \neq v_n$. Without loss of generality, we can assume that $v_0,\dots,v_n$ are distinct. If $n=1$, there is nothing to show because there is an edge from $v_0$ to $v_1$. Assume now $n>1$. By induction hypothesis, there is an edge from $v_0$ to $v_{n-1}$. Thus there exist edges $v_0 \to v_{n-1} \to v_n$ in $T$. Since \ttt{link} has been applied as long as possible, there must exist an edge from $v_0$ to $v_n$. (If there was no such edge, \ttt{link} would be applicable to $T$.)

Finally, $T$\/ is the smallest transitive extension of $G$ by the following invariant: given any derivation $G \der_{\mtt{link}} H$\/ and any edge $v \to v'$ in $H$\/ created by the derivation, there is no such edge in $G$ but a path from $v$ to $v'$. This invariant is shown by a simple induction on the length of derivations $G \der_{\mtt{link}} H$. 
\end{proof}

\section{Case Study: Vertex Colouring}
\label{sec:colouring}

A \emph{vertex colouring}\/ for a graph $G$ is an assignment of colours to $G$'s nodes such that adjacent nodes have different colours. As common in the literature \cite{Cormen-Leiserson-Rivest-Stein09a}, we use positive integers as colours. The program \ttt{Colouring} in Figure \ref{fig:colouring} colours an input graph by adding an integer to each node's label. Initially, each node gets colour 1 by the loop \ttt{init!}. To prevent repeated applications of \ttt{init} to the same node, nodes are first shaded and then unmarked by \ttt{init}. Once all nodes have got colour 1, the loop \ttt{inc!} repeatedly increments the target colour of edges that have the same colour at source and target. This continues as long as there are pairs of adjacent nodes with the same colour. Note that this process is highly non-deterministic and may result in different colourings; for example, Figure \ref{fig:colour_results} shows two different outcomes for the same input graph. (Finding a colouring with a minimal number of colours is an NP-complete problem \cite{Cormen-Leiserson-Rivest-Stein09a} and requires a more complicated program.)

\begin{figure}[htb]
 \begin{center}
  \input{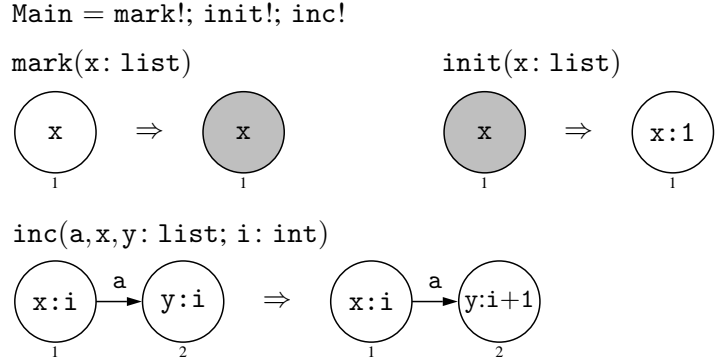}
 \end{center}
\caption{The program \ttt{Colouring}}
\label{fig:colouring}
\end{figure}

\begin{figure}[htb]
\begin{center}
 \input{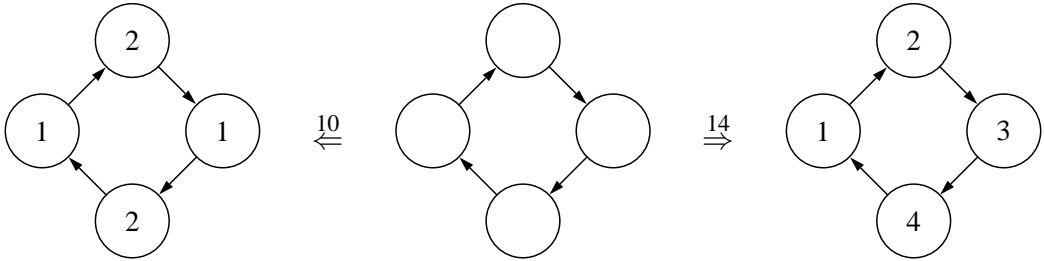} 
\end{center}
\caption{Different results from executing \ttt{Colouring}}
\label{fig:colour_results}
\end{figure}

The partial correctness of \ttt{Colouring} is easy to establish, we get it essentially for free from the meaning of the loop construct `\ttt{!}'. We will address termination afterwards.

\begin{proposition}[Partial correctness]
Given any input graph $G$, if \ttt{Colouring} terminates then it returns $G$ correctly coloured.  
\end{proposition}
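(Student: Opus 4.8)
The plan is to lean on the semantics of the loop construct \ttt{!}: a loop $P!$ terminates on a graph exactly when $P$\/ has become inapplicable to it. So I would analyse the three loops of \ttt{Main} one after another, in each case describing the graphs that can be reached, and then read proper colouredness off the inapplicability of \ttt{inc} at the very end. Throughout, I would use that all three rules change graphs only in very limited ways --- \ttt{mark} and \ttt{init} act on a single node and touch no edge, and \ttt{inc} changes nothing but the last entry of one node label --- so the relevant invariants are all establishable by a short induction on the length of the derivation, exactly in the style used for \ttt{TransClosure}.

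First I would handle \ttt{mark!; init!}. Since input graphs carry no marks and \ttt{mark} merely shades an unmarked node, every graph occurring along \ttt{mark!} is $G$\/ with some of its nodes greyed; hence when \ttt{mark!} stops, \ttt{mark} is inapplicable, i.e.\ no unmarked node is left, so \emph{all} nodes are grey. Next, \ttt{init} turns one grey node into an unmarked node carrying an extra trailing \mtt{1}, and since no rule re-greys a node, each node is hit by \ttt{init} at most once; as \ttt{init!} stops only when no grey node remains, every one of the $|V_G|$ nodes is hit exactly once. Conclusion: after \ttt{mark!; init!} the graph is $G$\/ with each node label $\ell$ replaced by $\ell\mtt{{:}1}$, all marks gone and the underlying graph untouched. (Both \ttt{mark!} and \ttt{init!} in fact terminate unconditionally, each step strictly decreasing a finite count, so the termination hypothesis in the statement is really only consumed by \ttt{inc!}.)

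Then I would prove an invariant for \ttt{inc!}: every graph reachable by $\dder_{\mtt{inc}}$ from the graph just described is $G$\/ with a \emph{positive} integer $c(v)$ appended to each node label $\ell_v$, and with the same underlying graph and edge labels as $G$. This holds at the start ($c(v)=1$ everywhere) and is preserved by \ttt{inc}, which replaces a trailing $i$ by $i{+}1$ --- still positive --- and leaves all structure, all edges and the non-trailing part of every label fixed. Hence, when \ttt{inc!} terminates with some result $H$, the graph $H$\/ is $G$\/ with a positive integer $c(v)$ attached to each node $v$, and \ttt{inc} is inapplicable to $H$. But \ttt{inc} applies precisely when some edge joins two nodes --- necessarily \emph{distinct}, since the match is injective --- whose trailing integers agree; its inapplicability therefore says $c(v) \neq c(v')$ whenever $v \neq v'$ are joined by an edge. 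Reading the $c(v)$ as colours, $H$\/ is $G$\/ correctly coloured, which is the claim.

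The one place that needs a little care is the bookkeeping in the second step --- that \ttt{mark!; init!} colours every node once and exactly once with \mtt{1} --- which genuinely relies on the standing assumption that input graphs are mark-free: a node carrying some other mark would be invisible to both \ttt{mark} and \ttt{init} and hence left uncoloured (and its mark would survive to the output). Everything else is immediate from the meaning of \ttt{!} together with the fact that \ttt{inc} neither deletes, creates nor restructures any item, so the trailing-integer labelling really is an invariant of \ttt{inc!}.
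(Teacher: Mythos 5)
Your proposal is correct and follows essentially the same route as the paper: establish that \ttt{mark!; init!} yields $G$ with every label $x$ replaced by $x{:}1$, then read correct colouredness of the final graph off the inapplicability of \ttt{inc} guaranteed by the semantics of \texttt{!}. You simply spell out in more detail (the once-and-only-once bookkeeping for \ttt{mark}/\ttt{init}, the trailing-integer invariant of \ttt{inc}, and the injectivity of matching) what the paper's proof asserts tersely.
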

\begin{proof}
Let $ G \dder^*_{\mtt{mark}} G' \dder^*_{\mtt{init}} H \dder^*_{\mtt{inc}} M$\/ be an execution of \ttt{Colouring} on $G$. Then $H$\/ is obtained from $G$ by replacing each node label $x$ with $x{:}1$. If $M$\/ is not correctly coloured, then it must contain adjacent nodes with the same colour. Hence \ttt{inc} would be applicable to $M$, but the meaning of `\ttt{!}' implies that $M$\/ results from applying \ttt{inc} as long as possible. 
\end{proof}

Proving that \ttt{Colouring} terminates is more challenging, we first establish an invariant of \ttt{inc}. Given a node $v$\/ with a label of the form $x\,{:}i$, $i \in \N$, we denote $i$\/ by $\mrm{colour}(v)$; for a host graph $G$ with coloured nodes, we define $\mrm{Colours}(G) = \{\mrm{colour}(v) \mid v \in V_G\}$.

\begin{lemma}[Invariant]
\label{lem:inc_invariant}
Consider any derivation $G \der_{\mathtt{inc}} H$ with\/ $\mrm{Colours}(G) = \{1\}$. Then $\mrm{Colours}(H) = \{i \mid 1 \leq i \leq n\}$ for some $1 \leq n \leq |V_H|$.
\end{lemma}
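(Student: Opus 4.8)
The plan is to prove the statement by induction on the length $k$ of the derivation $G \der_{\mtt{inc}} H$, taking as the real content of the induction the slightly stronger-looking claim that $\mrm{Colours}(H)$ is a non-empty initial segment of the positive integers, that is, $\mrm{Colours}(H) = \{i \mid 1 \leq i \leq n\}$ for some $n \geq 1$; the bound $n \leq |V_H|$ then follows for free. Observe first that \ttt{inc} neither creates nor deletes a node and never decreases a colour, so along the whole derivation every node keeps a label of the form $x{:}i$ with $i \geq 1$ and $\mrm{colour}$ stays well defined. The base case $k=0$ is immediate: $H = G$, so $\mrm{Colours}(H) = \{1\}$ and $n = 1$.

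For the induction step, write the derivation as $G \der_{\mtt{inc}} H' \dder_{\mtt{inc}} H$. By the induction hypothesis $\mrm{Colours}(H') = \{i \mid 1 \leq i \leq m\}$ for some $m \geq 1$, and $V_H = V_{H'}$. The last step applies \ttt{inc} at an injective match of its left-hand graph, which consists of two distinct nodes joined by an edge; by the shape of the rule both matched host nodes carry the same colour $i$, and the step changes only the target node's colour, from $i$ to $i+1$, leaving every other label untouched. The crucial point is that the source node is distinct from the target, survives the step and retains colour $i$, so colour $i$ is not lost; hence $\mrm{Colours}(H) = \mrm{Colours}(H') \cup \{i+1\}$. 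Since $i \in \mrm{Colours}(H')$ we have $1 \leq i \leq m$, so $i+1 \leq m+1$, and therefore $\mrm{Colours}(H)$ equals $\{i \mid 1 \leq i \leq m\}$ when $i < m$ and $\{i \mid 1 \leq i \leq m+1\}$ when $i = m$. In both cases it is an initial segment $\{i \mid 1 \leq i \leq n\}$ with $n \in \{m, m+1\}$, completing the induction.

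Finally, $n \leq |V_H|$ because $v \mapsto \mrm{colour}(v)$ maps $V_H$ onto $\mrm{Colours}(H) = \{i \mid 1 \leq i \leq n\}$; together with $n \geq 1$ this gives the claim. I do not anticipate a genuine obstacle; the one step that needs care is the identity $\mrm{Colours}(H) = \mrm{Colours}(H') \cup \{i+1\}$ — in particular that colour $i$ does not disappear when the target node is recoloured — and this rests exactly on \ttt{inc} matching a monochromatic edge (two distinct, equally coloured nodes) and recolouring only one of its endpoints.
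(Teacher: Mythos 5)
Your proof is correct and follows essentially the same route as the paper: establish that a single \ttt{inc} step can only add the colour one above an existing colour (and loses nothing, since the source node retains the shared colour), then induct on the derivation length to conclude that the colour set stays an initial segment $\{1,\dots,n\}$ with $n \leq |V_H|$. Your write-up is just a more detailed version of the paper's one-line step observation that $\mrm{Colours}(Y) = \mrm{Colours}(X) \cup \Delta$ with $\Delta$ empty or $\{\max(\mrm{Colours}(X))+1\}$.
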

\begin{proof}
For every step $X \dder_{\mathtt{inc}} Y$, we have $\mrm{Colours}(Y) = \mrm{Colours}(X) \cup \Delta$ where $\Delta$ is either empty or $\{\max(\mrm{Colours}(X)) + 1\}$. The invariant follows then by induction on the length of $G \der_{\mathtt{inc}} H$.
\end{proof}

Given any coloured host graph $X$, define $\#X = \sum_{v \in V_X} \mrm{colour}(v)$. Then the invariant of Lemma \ref{lem:inc_invariant} provides an upper bound for $\#H$, viz.\ $\#H \leq 1 + 2 + \dots + |V_H|$. We exploit this in the following proof.

\begin{proposition}[Termination]
On every input graph $G$, \ttt{Colouring} terminates after $\mathrm{O}(|V_G|^2)$ rule applications. 
\end{proposition}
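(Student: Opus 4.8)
The plan is to bound the three loops of \ttt{Main}, namely \ttt{mark!}, \ttt{init!} and \ttt{inc!}, separately and add up the step counts. For \ttt{mark!} and \ttt{init!} I would exhibit a termination measure that strictly decreases with every rule application. The rule \ttt{mark} matches only unmarked nodes and turns one of them grey, so the number of unmarked nodes drops by one at each step; since \ttt{mark} deletes neither nodes nor edges, the dangling condition never blocks it, so by the semantics of `\ttt{!}' the loop runs exactly until no unmarked node remains. As the input $G$ carries no marks, \ttt{mark!} therefore performs exactly $|V_G|$ steps and leaves every node grey. Symmetrically, \ttt{init} matches only grey nodes and replaces the mark by the colour $1$, so the number of grey nodes drops by one per step and \ttt{init!} performs exactly $|V_G|$ steps. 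Hence the graph $H$ reached after \ttt{mark!; init!} --- as in the decomposition $G \dder^*_{\mtt{mark}} G' \dder^*_{\mtt{init}} H \dder^*_{\mtt{inc}} M$ used in the partial-correctness proof --- satisfies $\mrm{Colours}(H) = \{1\}$, assuming $G$ has at least one node (if $G$ is the empty graph all three loops are vacuous).

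For \ttt{inc!} I would use the measure $\#X = \sum_{v \in V_X} \mrm{colour}(v)$ introduced just above the proposition. Two facts suffice. First, every step $X \dder_{\mtt{inc}} Y$ keeps the node set fixed and replaces one colour $i$ by $i{+}1$, so $|V_Y| = |V_X|$ and $\#Y = \#X + 1$; thus $\#$ strictly increases along any \ttt{inc}-derivation starting at $H$. Second --- and this is the key point --- Lemma \ref{lem:inc_invariant} applies not only to the endpoint of an \ttt{inc}-derivation but to \emph{every} graph $M$ with $H \der_{\mtt{inc}} M$, since each prefix of the \ttt{inc!} execution is again a derivation out of a graph whose colour set is $\{1\}$. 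Hence $\mrm{Colours}(M) = \{1,\dots,n\}$ with $n \leq |V_M| = |V_G|$, which forces $\#M \leq 1 + 2 + \dots + |V_G| = |V_G|(|V_G|+1)/2$ throughout the loop. Since $\#H = |V_G|$, the measure $\#$ grows by exactly $1$ per step and never exceeds $|V_G|(|V_G|+1)/2$, so \ttt{inc!} makes at most $|V_G|(|V_G|+1)/2 - |V_G| = \mathrm{O}(|V_G|^2)$ steps.

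Combining the three contributions, \ttt{Colouring} performs at most $|V_G| + |V_G| + \mathrm{O}(|V_G|^2) = \mathrm{O}(|V_G|^2)$ rule applications. I do not expect a genuine obstacle here; the two points needing care are (i) arguing that \ttt{mark!} and \ttt{init!} actually run to completion and colour \emph{all} nodes --- which is where the semantics of `\ttt{!}' together with the absence of dangling-condition problems is used --- and (ii) invoking Lemma \ref{lem:inc_invariant} for \emph{all} intermediate graphs of \ttt{inc!}, not merely the result, so that $\#$ stays bounded above during the entire loop and not only at its end.
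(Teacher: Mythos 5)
Your proof is correct and follows essentially the same route as the paper: the loops \ttt{mark!} and \ttt{init!} are bounded by the decreasing count of unmarked (resp.\ marked) nodes, and \ttt{inc!} is bounded by the strictly increasing measure $\#X$ together with the quadratic upper bound that Lemma \ref{lem:inc_invariant} supplies for every intermediate graph. Your explicit care on the two points you flag (the `\ttt{!}' semantics for the first two loops, and applying the invariant to all prefixes of the \ttt{inc}-derivation) only makes the paper's argument more precise.
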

\begin{proof}
It is clear that the loops \ttt{mark!} and \ttt{init!} terminate: the first decreases in each step the number of unmarked nodes, the second decreases in each step the number of marked nodes.

To show that \ttt{inc!} is terminating, consider a coloured graph $G$ with $\mrm{Colours}(G) = \{1\}$ and suppose that there is an infinite derivation $G=G_0 \dder_{\mtt{inc}} G_1 \dder_{\mtt{inc}} G_2 \dder_{\mtt{inc}} \dots$ Then, by the labelling of \ttt{inc}, we have $\#G_i < \#G_{i+1}$ for every $i \geq 0$. However, Lemma \ref{lem:inc_invariant} implies that for all $i \geq 0$,
\[ \#G_i \leq \sum_{j=1}^{|V_{G_i}|} j = \sum_{j=1}^{|V_{G}|} j \]
where $|V_{G_i}| = |V_{G}|$ holds because \ttt{inc} preserves the number of nodes. Thus the infinite derivation cannot exist. Moreover, because the upper bound for the values $\#G_i$ is quadratic in $|V_G|$, any sequence of \ttt{inc} applications starting from $G$ cannot contain more than $\mrm{O}(|V_G|^2)$ rule applications. 
\end{proof}

\section{Case Study: Cycle Checking}
\label{sec:cyclic}

Our third example program shows how to test the input graph for a property and then continue computing with the same graph. The program in Figure \ref{fig:cyclic} checks whether a host graph $G$ contains a directed cycle and then, depending on the result, executes either program $P$\/ or program $Q$ on $G$. 

\begin{figure}[htb]
 \begin{center}
  \input{Programs/cyclic.prog}
 \end{center}
\caption{The program \ttt{CycleCheck}}
\label{fig:cyclic}
\end{figure}

The presence of cycles is checked by deleting, as long as possible, edges whose source nodes have no incoming edges. To ensure the latter, rule \texttt{delete} uses GP 2's built-in function \ttt{indegree} which returns the number of edges going into a node.  When \texttt{delete} is no longer applicable, the resulting graph contains edges if and only if the input graph is cyclic. For example, Figure \ref{fig:CycleCheck_execution} shows two executions of \ttt{CycleCheck}, the left on a cyclic input graph and the right on an acyclic graph. (We also show an intermediate graph for each execution.)

\begin{figure}[htb]
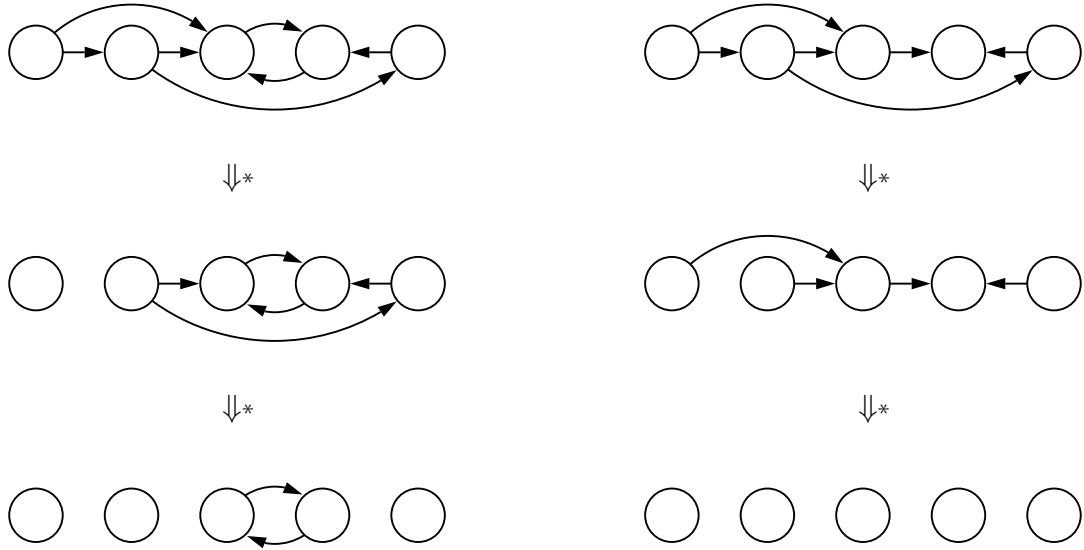

\begin{center}
\begin{tabular}{ccc}
\input{Graphs/cyclic.graph} & \hspace{1cm} & \input{Graphs/acyclic.graph}
\\
\input{Graphs/downarrow.graph} && \input{Graphs/downarrow.graph}
\\
\input{Graphs/cyclic1.graph} && \input{Graphs/acyclic1.graph}
\\
\input{Graphs/downarrow.graph} && \input{Graphs/downarrow.graph}
\\
\input{Graphs/cyclic2.graph} && \input{Graphs/acyclic2.graph}
\end{tabular}
\end{center}
\caption{Two executions of \ttt{CycleCheck} \label{fig:CycleCheck_execution}}
\end{figure}

The correctness of this method for cycle checking relies on the following invariant.

\begin{lemma}[Invariant]
\label{lem:delete_invariant}
Given any step $G \dder_{\mtt{delete}} H$, graph $G$ is cyclic if and only if graph $H$ is cyclic.
\end{lemma}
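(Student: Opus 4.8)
The plan is to unpack what a single step $G \dder_{\mtt{delete}} H$ actually does and then observe that the equivalence is immediate. Rule \ttt{delete} has left-hand graph consisting of two nodes $1, 2$ joined by an edge, and its right-hand graph keeps both nodes (with unchanged labels) and removes the edge; the application condition requires $\mtt{indeg(1)=0}$. Since matches are injective, the images $v$ of node $1$ and $w$ of node $2$ are distinct, so the matched edge $e$ (from $v$ to $w$) is not a loop, and no node is deleted, so the dangling condition is vacuous. Hence the first step of the proof is to record that $H$ is obtained from $G$ by deleting the single edge $e$ and nothing else, where additionally $v$ has no incoming edge in $G$.

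For the ``if'' direction I would simply note that $H$ is a subgraph of $G$ (one fewer edge, same nodes, same labels), so any directed cycle of $H$ is already a directed cycle of $G$; thus $H$ cyclic implies $G$ cyclic.

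For the ``only if'' direction, suppose $G$ contains a directed cycle $C$. Every node lying on $C$ must have an incoming edge that belongs to $C$. But $v$ has indegree $0$ in $G$, so $v$ cannot lie on $C$; in particular $C$ does not traverse $e$, since $e$ is incident with $v$. Therefore $C$ is a directed cycle of $H$ as well, so $H$ is cyclic.

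I do not expect any genuine obstacle here: the only point requiring care is the very first step, namely making precise that one \ttt{delete} step amounts to removing a single non-loop edge whose source has indegree $0$ — and this is read off directly from the rule in Figure \ref{fig:cyclic}. Once that is pinned down, both implications are one-line arguments about directed cycles, and the proof is complete.
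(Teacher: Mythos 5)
Your proof is correct and follows essentially the same route as the paper's: the paper argues that the application condition $\mtt{indeg(1)=0}$ prevents any edge on a cycle from being deleted (your ``only if'' direction, which you justify in slightly more detail by noting every node on a cycle has an incoming edge), and that deleting an edge from an acyclic graph cannot create a cycle (the contrapositive of your subgraph argument for the ``if'' direction). No gaps.
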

\begin{proof}
Suppose that $G$ contains a directed cycle. By the application condition of \texttt{delete}, none of the edges on the cycle are deleted. Conversely, if $G$ is acyclic, deleting an edge cannot create a cycle.
\end{proof}

We also need the following property of acyclic graphs, which is easy to prove \cite{Bang_Jensen-Gutin09a}.

\begin{lemma}[Acyclic graphs]
\label{lem:acyclic-graphs}
Every non-empty acyclic graph contains a node without incoming edges.
\end{lemma}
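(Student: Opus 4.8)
The plan is to argue by contradiction, using the fact that GP~2 host graphs are finite. Suppose $G$ is a non-empty acyclic graph in which every node has at least one incoming edge. Starting from an arbitrary node $v_0$ (which exists since $G$ is non-empty), build a sequence $v_0, v_1, v_2, \dots$ such that there is an edge from $v_{i+1}$ to $v_i$ for every $i \geq 0$: having chosen $v_i$, let $v_{i+1}$ be the source of some edge going into $v_i$, which exists by assumption. Since $V_G$ is finite, the pigeonhole principle yields indices $i < j$ with $v_i = v_j$, and then $v_j \to v_{j-1} \to \dots \to v_{i+1} \to v_i = v_j$ is a directed cycle (in the extreme case $j = i+1$ it is a loop on $v_i$, which still counts as a cycle). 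This contradicts the acyclicity of $G$, so $G$ must have a node without incoming edges.

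An equivalent, slightly more constructive line is to consider a directed path $v_0, v_1, \dots, v_n$ in $G$ of maximum length, which exists because $G$ is finite and non-empty. Then $v_0$ has no incoming edge: an edge into $v_0$ from a node $v_k$ lying on the path would produce the directed cycle $v_0 \to v_1 \to \dots \to v_k \to v_0$, contradicting acyclicity, while an edge into $v_0$ from a node not on the path would yield a strictly longer path, contradicting maximality. Hence $v_0$ is the desired node.

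I expect no genuine obstacle here; the only point worth stating explicitly is the finiteness of host graphs, which is exactly what forces the backward walk to run into a repetition, respectively guarantees the existence of a longest path. I would present the contradiction argument as the main proof since it is the most direct, and the citation \cite{Bang_Jensen-Gutin09a} already signals that this is a textbook fact.
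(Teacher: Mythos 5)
Your proposal is correct. Note, however, that the paper offers no proof of this lemma at all: it is stated as an ``easy to prove'' textbook fact with a pointer to Bang-Jensen and Gutin, so there is no argument in the paper to compare against. Both of your arguments are the standard ones and are sound for GP~2 host graphs (which are finite and may contain loops and parallel edges, cases you handle explicitly); either the backward-walk-with-pigeonhole contradiction or the maximal-path argument would serve as a complete proof, and the only essential ingredient in each is the finiteness of host graphs, which you correctly identify.
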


For stating the correctness of \ttt{CycleCheck}, we use GP 2's semantic function $\Sem{\_}$ which maps each input graph to the set of possible execution outcomes (see \cite{Plump12a}).

\begin{proposition}[Correctness]
For every host graph $G$, 
\[ \Sem{\mtt{CycleCheck}}G = \left\{ \begin{array}{cl}
                                      \Sem{P}G & \text{if $G$ is cyclic,}\\
                                      \Sem{Q}G & \text{otherwise.}
                                     \end{array} \right.
\]
\end{proposition}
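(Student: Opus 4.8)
The plan is to reduce the statement to the behaviour of the subprogram \ttt{Cyclic} and then apply the semantics of \ttt{if}-\ttt{then}-\ttt{else}. Recall that $\Sem{\ift{C}{P}\ \ttt{else}\ Q}G$ equals $\Sem{P}G$ whenever some execution of $C$ on (a copy of) $G$ succeeds, equals $\Sem{Q}G$ whenever some execution of $C$ on $G$ fails, and is the union of both sets if both can occur (with $C$ never diverging here, so no further cases arise). Hence it suffices to establish two claims: \emph{(a)} if $G$ is cyclic, then every execution of \ttt{Cyclic} on $G$ terminates with a graph and none fails; \emph{(b)} if $G$ is acyclic, then every execution of \ttt{Cyclic} on $G$ fails and none succeeds. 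The proposition is then immediate.

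The first preliminary step is to note that \ttt{delete!} always terminates and cannot fail: each application of \ttt{delete} removes exactly one edge while keeping all nodes (the dangling condition holds vacuously), so the edge count strictly decreases and the loop stops after at most $|E_G|$ steps with some graph $G'$; being a loop, \ttt{delete!} never fails. Thus \ttt{Cyclic} always proceeds to the single rule-set call $\{\ttt{edge},\ttt{loop}\}$ on $G'$. Since matching is injective, \ttt{edge} matches exactly the non-loop edges and \ttt{loop} exactly the loops, so $\{\ttt{edge},\ttt{loop}\}$ is applicable to a graph $X$ iff $E_X \neq \emptyset$; it then succeeds iff $X$ has an edge and fails iff $X$ is edgeless (leaving the graph unchanged either way, which is harmless since $P$ and $Q$ act on the original $G$). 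Consequently \ttt{Cyclic} succeeds on $G$ iff $E_{G'} \neq \emptyset$ and fails on $G$ iff $E_{G'} = \emptyset$, for every graph $G'$ obtainable from $G$ by a maximal \ttt{delete}-derivation.

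Now I would pin down $G'$. A routine induction on the length of $G \der_{\mtt{delete}} G'$ using Lemma~\ref{lem:delete_invariant} shows that $G'$ is cyclic iff $G$ is cyclic. For claim \emph{(a)}: if $G$ is cyclic then $G'$ is cyclic, hence $E_{G'} \neq \emptyset$, so $\{\ttt{edge},\ttt{loop}\}$ succeeds; and since the only point at which \ttt{Cyclic} could fail is this rule-set call, no execution fails. For claim \emph{(b)}: if $G$ is acyclic then $G'$ is acyclic, and I claim $E_{G'} = \emptyset$. Since \ttt{delete!} terminated at $G'$, rule \ttt{delete} is not applicable to $G'$, i.e.\ $G'$ has no edge whose source has indegree $0$. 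If $G'$ had an edge, consider the (still acyclic, still non-empty) subgraph $G''$ of $G'$ obtained by discarding all isolated nodes; by Lemma~\ref{lem:acyclic-graphs}, $G''$ has a node $v$ with no incoming edge, and since $v$ is not isolated it has an outgoing edge in $G''$, hence an outgoing edge in $G'$ whose source $v$ has indegree $0$ in $G'$ --- so \ttt{delete} would apply, a contradiction. (Alternatively: follow incoming edges backwards from the source of any edge; this never gets stuck, so by finiteness it produces a directed cycle, contradicting acyclicity.) Hence $E_{G'} = \emptyset$, so $\{\ttt{edge},\ttt{loop}\}$ fails and \ttt{Cyclic} fails; and since \ttt{delete!} cannot fail, no execution succeeds. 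Combining \emph{(a)} and \emph{(b)} with the semantics of \ttt{if}-\ttt{then}-\ttt{else} gives the claim.

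The step I expect to be the main obstacle is the one in claim \emph{(b)} that a \ttt{delete!}-terminal acyclic graph must be completely edgeless --- that is, correctly combining ``no edge of $G'$ has an indegree-$0$ source'' with acyclicity. Lemma~\ref{lem:acyclic-graphs} (applied to the non-isolated part of $G'$), or equivalently the backward-walk argument, resolves it; the remaining reasoning is just bookkeeping about the loop and rule-set constructs.
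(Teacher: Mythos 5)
Your proof is correct and follows essentially the same route as the paper: termination of \ttt{delete!} by decreasing graph size, preservation of (a)cyclicity via Lemma~\ref{lem:delete_invariant} and induction, and the contradiction argument using Lemma~\ref{lem:acyclic-graphs} on the subgraph without isolated nodes to show an irreducible acyclic graph is edgeless. Your only additions are bookkeeping the paper leaves implicit (that the claim must hold for \emph{every} maximal \ttt{delete}-derivation, and that no execution can fail/succeed in the respective cases), which is a mild strengthening rather than a different approach.
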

\begin{proof}
We show that running procedure \ttt{Cyclic} on $G$ returns some graph if $G$ is cyclic, and fails otherwise. By the meaning of the if-then-else statement, this implies the proposition.

Since each application of \ttt{delete} reduces graph size, executing the loop \ttt{delete!} on $G$ results in some graph $H$.

\emph{Case 1:} $G$ is cyclic. Then, by Lemma \ref{lem:delete_invariant} and a simple induction on the derivation $G \der_{\mtt{delete}} H$, graph $H$\/ is also cyclic. Hence $H$\/ contains at least one edge, implying that $\{\mtt{edge},\mtt{loop}\}$ is applicable to $H$.

\emph{Case 2:} $G$ is acyclic. With Lemma \ref{lem:delete_invariant} follows that $H$\/ is acyclic, too. We show that $H$\/ does not contain edges. Suppose that $H$\/ contains some edges (which cannot be loops). Consider the subgraph $S$\/ of $H$\/ obtained by removing all isolated nodes. Then, by Lemma \ref{lem:acyclic-graphs}, there is a node $v$ in $S$\/ without incoming edges. Since $v$ is not isolated, it must have an outgoing edge $e$. But then \ttt{delete} is applicable to $e$, contradicting the fact that \ttt{delete} is not applicable to $H$.

Thus $H$\/ is edge-less and hence $\{\mtt{edge},\mtt{loop}\}$ fails on $H$.
\end{proof}

\section{Case Study: Series-Parallel Graphs}
\label{sec:series-parallel}

Our final case study is the recognition of series-parallel graphs. This graph class was introduced in \cite{Duffin65a} as a model of electrical networks and comes with an inductive definition:
\begin{itemize}
\item Every graph $G$ consisting of two distinct nodes $v_1,\,v_2$ and an edge from $v_1$ to $v_2$ is series-parallel. Define $\mrm{source}(G) = v_1$ and $\mrm{sink}(G) = v_2$.
\item Given series-parallel graphs $G$ and $H$, each of the following operations yields a series-parallel graph when applied to the disjoint union $G+H$:
\begin{itemize}
\item Series composition: Merge $\mrm{sink}(G)$ with $\mrm{source}(H)$. Define $\mrm{source}(G)$ to be the new source and $\mrm{sink}(H)$ to be the new sink.
\item Parallel composition: Merge $\mrm{source}(G)$ with $\mrm{source}(H)$ and $\mrm{sink}(G)$ with $\mrm{sink}(H)$. Define the merged source nodes to be the new source and the merged sink nodes to be the new sink.
\end{itemize}
\end{itemize}
Series-parallel graphs can be characterised by two reduction operations on graphs. On GP 2 graphs, these operations are equivalent to applying one of the rules \ttt{series} and \ttt{parallel} from Figure \ref{fig:series-parallel}. Hence we can state the characterisation in terms of derivations with the rule set $\mtt{Reduce} = \{\mtt{series},\,\mtt{parallel}\}$.

\begin{proposition}[\cite{Duffin65a}]
\label{prop:characterisation}
A host graph $G$ is series-parallel if and only if there is a derivation $G \der_{\mrm{Reduce}} E$, where $E$ is a series-parallel graph consisting of two nodes and one edge.
\end{proposition}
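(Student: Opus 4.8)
The plan is to prove the two implications separately, using the inductive definition of series-parallel graphs for the ``only if'' direction and induction on the length of the derivation for the ``if'' direction. Throughout I would read \ttt{series} as the rule that contracts an interior node of degree two (a path of two edges becomes one edge, the middle node being deleted) and \ttt{parallel} as the rule that merges two parallel edges into one.

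For the ``only if'' direction, I would induct on the construction of the series-parallel graph $G$. If $G$ is a single edge between two distinct nodes, the empty derivation suffices. If $G$ arises by series or parallel composition from series-parallel graphs $G_1$ and $G_2$, the induction hypothesis yields derivations $G_i \der_{\mrm{Reduce}} E_i$ with each $E_i$ a graph on two nodes and one edge. The key sublemma is that each such derivation can be transported into $G$. Since $G$ is obtained from $G_1+G_2$ by merging only source and sink nodes, and since the only application condition carried by \ttt{series} is the dangling condition on the node it deletes (while \ttt{parallel} deletes nothing), a step of $G_i$'s derivation can only be blocked in $G$ if the deleted node acquires extra incident edges — i.e.\ if it is $\mrm{source}(G_i)$ or $\mrm{sink}(G_i)$. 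One checks the invariant that, throughout the sub-derivation $G_i \der_{\mrm{Reduce}} E_i$, the source of $G_i$ keeps in-degree $0$ and the sink keeps out-degree $0$ (this holds for the series-parallel graph $G_i$ and is preserved by both rules, as the only in-degree a \ttt{series} step raises is that of its node $\texttt{3}$, which therefore cannot be the source); hence neither is ever the node removed by \ttt{series}, and the transport is legitimate (an instance of the embedding theorem for DPO rewriting). Transporting first $G_1$'s and then $G_2$'s derivation collapses the two parts of $G$ to single edges, leaving a path of length two (series case) or a pair of parallel edges (parallel case) between $\mrm{source}(G)$ and $\mrm{sink}(G)$, which are distinct by a routine induction on the construction; one further application of \ttt{series} resp.\ \ttt{parallel} then produces the required $E$.

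For the ``if'' direction, suppose $G \der_{\mrm{Reduce}} E$ with $E$ a two-node one-edge graph, and induct on the length of this derivation. The base case ($G=E$) is immediate. For the step, write $G \dder_{\mrm{Reduce}} G' \der_{\mrm{Reduce}} E$; by the induction hypothesis $G'$ is series-parallel, so it remains to show that undoing a single \ttt{series} or \ttt{parallel} step preserves series-parallelness. Here the crucial observation is that the leaves of any construction tree of a series-parallel graph correspond bijectively to its edges. Taking the edge of $G'$ that the reduction step created (for \ttt{series} the newly added edge, for \ttt{parallel} the surviving edge), I would locate the corresponding leaf in a construction of $G'$ and substitute for it the construction ``series composition of two fresh edges'' resp.\ ``parallel composition of two edges'', renaming the internal node to be fresh. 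The substituted tree is again a valid construction, and the graph it builds is exactly $G$ — here one uses that a \ttt{series} step removes a node of degree exactly two (forced by the dangling condition) and that matches are injective, so no loops arise. Hence $G$ is series-parallel.

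I expect the main obstacle to lie in the ``if'' direction, specifically in this substitution argument: one must make precise the passage between the \emph{local}, rule-based description of a reduction step and the \emph{global}, compositional definition of series-parallel graphs — pinpointing the reduced edge inside a construction tree, verifying that the replacement yields a legal construction, and checking that the resulting graph is isomorphic to $G$ itself rather than merely to some graph with the same one-step reduct. The ``only if'' direction is comparatively routine once the embedding theorem is invoked; the bookkeeping there (the source/sink degree invariant and distinctness of source and sink) is straightforward but should be stated explicitly.
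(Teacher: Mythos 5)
There is no proof in the paper to compare against: Proposition~\ref{prop:characterisation} is imported from the literature (it is attributed to Duffin's 1965 paper) and stated without proof, so you have supplied an argument where the author supplies a citation. On its own terms your sketch is the standard proof and I see no essential gap. In the ``only if'' direction the key points are all present: the in-degree/out-degree invariant for source and sink does hold for every series-parallel graph and is preserved by both rules, so the node deleted by a \ttt{series} step inside a sub-derivation of $G_i$ is always an interior node of $G_i$ and the embedding into $G$ (where only $\mrm{source}(G_i)$ and $\mrm{sink}(G_i)$ acquire new incident edges) never violates the dangling condition; the same invariant shows that the two surviving nodes of $E_i$ are exactly source and sink and that the remaining edge points from source to sink. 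One small slip: the \ttt{series} rule has interface nodes $1$ and $2$ only (there is no node $3$), and in fact no node's in-degree is raised by a \ttt{series} step at all --- node $2$ trades its in-edge from the deleted node for the new in-edge from node $1$ --- which only makes your invariant easier to verify. In the ``if'' direction, the leaf-for-edge substitution in a construction tree is the right device, and your two checks (injectivity of the match forces the endpoints of the created edge to be distinct, and the dangling condition forces the deleted node to have degree exactly two) are precisely what is needed to conclude that the substituted tree builds $G$ itself and not merely some graph with the same reduct. If you wanted to tighten the write-up, the one lemma worth isolating explicitly is that replacing a leaf edge of a construction tree by a construction of a series-parallel graph with matching source and sink again yields a valid construction; that is the closure property your substitution step silently relies on.
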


Given a host graph $G$, the procedure \ttt{Series-parallel} in Figure \ref{fig:series-parallel} applies the reduction rules as long as possible to obtain some reduced graph $H$. (Termination is guaranteed because both rules decrease graph size.) To check whether $H$ has the shape of graph $E$ from Proposition \ref{prop:characterisation}, the procedure attempts to delete a subgraph of $E$'s shape and then tests if there is any remaining node. If the deletion fails or a remaining node is detected, $H$ does not have the shape of $E$ and the procedure fails. Otherwise, if rule \ttt{delete} succeeds and rule \ttt{nonempty} fails, the procedure succeeds by returning the empty graph.

\begin{figure}[htb]
 \begin{center}
  \input{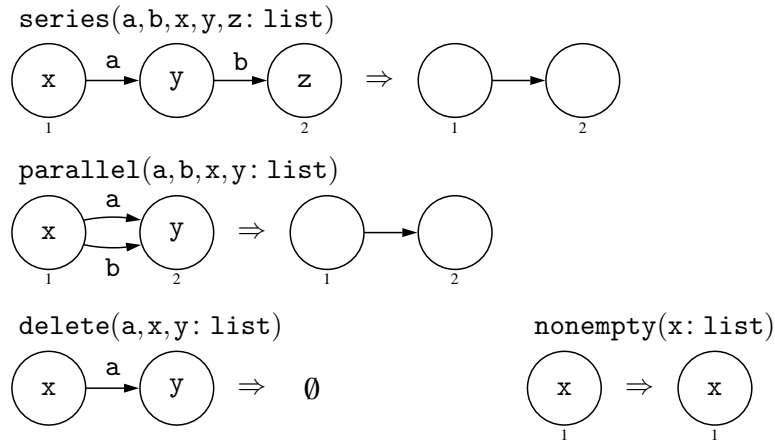}
 \end{center}
\caption{The program \ttt{SeriesParallel}}
\label{fig:series-parallel}
\end{figure}

Proposition \ref{prop:characterisation} alone does not guarantee the correctness of SeriesParallel. This is because if the non-deterministic application of \ttt{Reduce} does not end in $E$, there might be some other reduction sequence ending in $E$. We exclude this possibility by showing that any complete reduction of a host graph ends in a graph that is unique up to isomorphism.

\begin{proposition}[Uniqueness of reduced graphs]
\label{prop:uniqueness}
Consider reductions $H_1 \lder_{\mrm{Reduce}} G \der_{\mrm{Reduce}} H_2$ of some host graph $G$. If $H_1$ and $H_2$ are irreducible, then they are isomorphic.
\end{proposition}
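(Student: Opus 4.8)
The plan is to derive the proposition from the \emph{unique normal form} property of $\dder_{\mrm{Reduce}}$, i.e.\ to show that every host graph reduces to exactly one irreducible graph up to isomorphism. Termination is already available (each of \ttt{series} and \ttt{parallel} strictly decreases the number of nodes plus edges, a quantity invariant under isomorphism), so by Newman's Lemma it suffices to prove \emph{local confluence}: whenever $H_1 \ldder_{\mrm{Reduce}} G \dder_{\mrm{Reduce}} H_2$ are single steps, there is a graph $W$ with $H_1 \der_{\mrm{Reduce}} W$ and $H_2 \der_{\mrm{Reduce}} W$. Everything is read up to isomorphism throughout, which is harmless since the result of applying a rule is determined only up to isomorphism anyway. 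Granting local confluence, the $H_1$ and $H_2$ of the proposition are both normal forms reachable from $G$, hence equal in the quotient, i.e.\ isomorphic.

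To prove local confluence I would classify the two competing steps by how their matches overlap. If the matches are \emph{parallel independent} (they share only items that both rules preserve), the two steps commute by the standard Local Church--Rosser property of double-pushout rewriting, and $W$ is obtained by performing the respective other step afterwards. The observation that trims the analysis is that the dangling condition forces the middle node $y$ of any applicable \ttt{series}-redex to have exactly one incoming edge $a$ and exactly one outgoing edge $b$, and no other incident edges. Consequently a \ttt{series}-redex and a \ttt{parallel}-redex (merging two parallel edges $e_1,e_2$ between distinct nodes $p,q$) are always parallel independent: neither $e_1$ nor $e_2$ can equal $a$ or $b$, and $y$ cannot be $p$ or $q$, as either would give $y$ a second incoming or outgoing edge. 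Hence the only overlaps needing genuine attention are \ttt{series}/\ttt{series} and \ttt{parallel}/\ttt{parallel}, and a short inspection of the deleted items shows that in each of these the only non-independent configuration (besides the two redexes coinciding) is the following.

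For \ttt{series}/\ttt{series}: the two middle nodes must lie consecutively on a chain $u \to y_1 \to w \to w'$ in which both $y_1$ and $w$ have in- and out-degree one (the mirror case is symmetric). Contracting $y_1$ and then $w$, or $w$ and then $y_1$, both yield the single edge $u \to w'$ with the rest of the graph unchanged, so $W$ exists after one further \ttt{series} step on each side; in the degenerate subcase $w' = u$ the two one-step reducts are isomorphic two-cycles (on $\{u,w\}$ resp.\ $\{u,y_1\}$, the rest untouched), and $W$ may be taken to be either. For \ttt{parallel}/\ttt{parallel}: a non-independent overlap means the two two-element edge sets share an edge, giving three parallel edges $e_1,e_2,f$ between one pair of nodes; merging $\{e_1,e_2\}$ and then the result with $f$, versus merging $\{e_2,f\}$ and then the result with $e_1$, both collapse $e_1,e_2,f$ to a single edge and leave everything else fixed, so again $W$ exists after one more \ttt{parallel} step on each side.

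The mathematical content is routine — termination plus a finite critical-pair check, packaged by Newman's Lemma — so there is no deep obstacle, only bookkeeping. The two points that require care are verifying precisely that the dangling condition pins the middle node of a \ttt{series}-redex to in- and out-degree one (this is what eliminates all \ttt{series}/\ttt{parallel} conflicts and most \ttt{series}/\ttt{series} ones), and noticing the closed-chain subcase of the \ttt{series}/\ttt{series} critical pair, where the two reducts are already in the desired form and one joins them by observing they are isomorphic rather than by further rewriting.
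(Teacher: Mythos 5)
Your proof is correct and follows the same underlying strategy as the paper: termination plus an analysis of overlapping rule applications yields confluence, hence uniqueness of normal forms. The packaging differs. The paper cites the critical-pair technique of Plump (2005) and checks that all critical pairs are \emph{strongly joinable}, leaving the reduction from "critical pairs are strongly joinable" to "locally confluent" to the cited theorem; you instead re-derive that reduction inline, invoking Newman's Lemma and handling non-overlapping steps by parallel independence (Local Church--Rosser) and overlapping steps by direct inspection inside the ambient graph $G$. This buys you a self-contained argument, and it also quietly discharges the one subtle point: working "up to isomorphism" is \emph{not} harmless in general for critical-pair arguments (that is exactly why strong joinability, not mere joinability, is required), but your analysis is done in context and you verify in the cyclic \ttt{series}/\ttt{series} case that the isomorphism between the two reducts fixes everything outside the overlap --- which is precisely the content of the paper's remark that "the outer graphs are isomorphic in the way required by strong joinability." The substantive case analysis coincides with the paper's: the dangling condition eliminates all \ttt{series}/\ttt{parallel} conflicts, the linear and cyclic chain overlaps are the two \ttt{series} critical pairs, and the triple parallel edge is the nontrivial \ttt{parallel} critical pair (the paper's second \ttt{parallel} pair, where the two matches cover the same two edges in swapped order, is the trivial one you implicitly subsume). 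One cosmetic omission: both rules relabel the surviving nodes with the empty list, so the joined graphs are unlabelled rather than literally "the rest unchanged"; this does not affect the argument since list variables match the empty list.
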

\begin{proof} 
We show that \ttt{Reduce} is \emph{confluent}\/ which implies the claim. To employ the technique of \cite{Plump05a}, we analyse the \emph{critical pairs}\/ of all rules obtained from \ttt{series} and \ttt{parallel} by replacing variables with constant lists. There are no critical overlaps between \ttt{series} and \ttt{parallel}, hence all critical pairs are self-overlaps of either \ttt{series} or \ttt{parallel}. There are two types of critical pairs of \ttt{parallel}, which are easily shown to be \emph{strongly joinable} in the terminology of \cite{Plump05a}. There are also two types of critical pairs of \ttt{series}. We consider one of them:
\begin{center}
 \input{Graphs/critical-pair.graph}
\end{center}
where $a$ to $c$ and $w$ to $z$ are arbitrary host graph lists. This pair is strongly joinable by the following applications of \ttt{series}:
\begin{center}
 \input{Graphs/joining-pair.graph}
\end{center}
The other critical pair of \ttt{series} is obtained from the above pair by merging nodes 1 and 4 in all three graphs. Then the outer graphs are isomorphic in the way required by strong joinability. With \cite{Plump05a} follows that \ttt{Reduce} is confluent.
\end{proof}

\begin{proposition}[Correctness]
For every host graph $G$, running \ttt{Series-parallel} on $G$ returns the empty graph if $G$ is series-parallel and fails otherwise.
\end{proposition}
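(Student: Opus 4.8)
The plan is to first pin down what the loop \ttt{Reduce!} computes on $G$, and only then trace the remaining commands \ttt{delete} and \ttt{if nonempty then fail} on the result. Write $E$ for the graph consisting of two distinct nodes joined by a single edge. First I would note that $E$ is \emph{irreducible}: \ttt{series} needs a directed path of length two and \ttt{parallel} needs two parallel edges, so no rule of \ttt{Reduce} applies to $E$. Since both reduction rules strictly decrease graph size, \ttt{Reduce!} terminates on every input, producing some irreducible graph $H$, and by Proposition~\ref{prop:uniqueness} this $H$ is determined by $G$ up to isomorphism.

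Next I would relate $H$ to $E$ by a case distinction on $G$. If $G$ is series-parallel, Proposition~\ref{prop:characterisation} yields a derivation $G \der_{\mrm{Reduce}} E$; since $E$ is irreducible, uniqueness of reduced graphs forces $H \cong E$. If $G$ is not series-parallel, then $H \not\cong E$, because otherwise $G \der_{\mrm{Reduce}} H \cong E$ would make $G$ series-parallel by Proposition~\ref{prop:characterisation}. (Degenerate inputs, such as the empty graph or a single node, are not series-parallel and fall under the second case, with $H = G$.)

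The core of the argument, and the part I expect to need the most care, is the behaviour of \ttt{delete} followed by \ttt{if nonempty then fail} on $H$. Because \ttt{delete} removes two distinct nodes together with the single edge between them, the dangling condition and injective matching make \ttt{delete} applicable to a graph $X$ precisely when $X$ has a connected component isomorphic to $E$, and its effect is then exactly to erase such a component. Hence, if $H \cong E$, \ttt{delete} transforms $H$ into the empty graph; \ttt{nonempty} then fails on the empty graph, so \ttt{if nonempty then fail} behaves as \ttt{skip}, and \ttt{Series-parallel} returns the empty graph. If instead $H \not\cong E$, then in every execution either \ttt{delete} is inapplicable to $H$ and the sequence fails immediately, or \ttt{delete} removes an $E$-component and leaves a non-empty graph $H'$ (non-empty precisely because $H \not\cong E$ means $H$ contains strictly more than that component); then \ttt{nonempty} succeeds on $H'$, \ttt{fail} is executed, and \ttt{Series-parallel} fails. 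Either way \ttt{Series-parallel} fails on $G$ whenever $G$ is not series-parallel. Combining the two cases yields the claim. The only genuinely delicate point is the characterisation of \ttt{delete}'s applicability via the dangling condition, together with the bookkeeping needed to see that non-determinism in the choice of $E$-component cannot produce the empty graph when $H \not\cong E$.
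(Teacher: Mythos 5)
Your proposal is correct and follows essentially the same route as the paper: use Proposition~\ref{prop:uniqueness} together with Proposition~\ref{prop:characterisation} to determine whether the irreducible graph $H$ produced by \ttt{Reduce!} is isomorphic to $E$, and then trace \ttt{delete} and \ttt{if nonempty then fail} on $H$. You merely spell out two points the paper leaves implicit (that $E$ is irreducible, and the exact applicability condition of \ttt{delete} via the dangling condition), which is fine.
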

\begin{proof}
Let $H$ be the graph resulting from running \ttt{Reduce} on $G$. By Proposition \ref{prop:uniqueness}, $H$ is determined uniquely up to isomorphism. Thus, if $G$ is series-parallel, Proposition \ref{prop:characterisation} implies that $H$ consists of two nodes and an edge between them. It follows that \ttt{Series-parallel} returns the empty graph.

If $G$ is not series-parallel, Proposition \ref{prop:characterisation} implies that $H$ has some other shape. Then either rule \ttt{delete} is not applicable or \ttt{delete} is applicable but afterwards \ttt{nonempty} is applicable. In both cases \ttt{Series-parallel} fails, as required. 
\end{proof}

\section{Conclusion}
\label{sec:conclusion}

In this paper, we show by some case studies that GP 2's graph transformation rules allow high-level reasoning on partial and total correctness of graph programs. Instead of employing a rigorous formalism such as the Hoare-logic in \cite{Poskitt-Plump12a}, we use standard mathematical language for expressing assertions and proofs. This both increases the expressive power of specifications and frees programmers from the notational constraints of a formal verification calculus. The main mathematical tool in our sample proofs is induction over derivation sequences of graph transformation rules.

We do not propose to abandon rigorous program verification in favour of a more liberal approach to justifying correctness. In contrast, proof calculi with precise syntax and semantics are indispensable for achieving confidence in verified software. We view this paper's approach as complementary in that it allows programmers to reason about graph programs without getting stuck in formal detail. Future work should address the question how to refine proofs in ordinary mathematical language into rigorous proofs in verification calculi.

\bibliographystyle{eptcs}
\bibliography{/home/djp10/Bibtex/abbr,%
              /home/djp10/Bibtex/absredu.bib,%
              /home/djp10/Bibtex/trs,%
              /home/djp10/Bibtex/gragra,%
              /home/djp10/Bibtex/graph-algorithms,%
              /home/djp10/Bibtex/graphs-misc,%
              /home/djp10/Bibtex/gratra-languages,%
              /home/djp10/Bibtex/graph-matching,%
              /home/djp10/Bibtex/proglang,%
              /home/djp10/Bibtex/pointers,%
              /home/djp10/Bibtex/verification,%
              /home/djp10/Bibtex/misc}

\end{document}